\title{The Hard Problems Are Almost Everywhere For Random CNF-XOR Formulas	\thanks{The author list has been sorted alphabetically by last name; this should not be used to determine the extent of authors' contributions.}}
\author{Jeffrey M. Dudek\\Rice University \And Kuldeep S. Meel\\Rice Univesity \And  Moshe Y. Vardi\\Rice University
}
\begin{document}

\pgfkeys{
	/cnfxor/.is family, /cnfxor,
	default/.style = {k = {k}, n = {n}, r = {r}, s = {s}, p = {p}, cnfNum = {}, xorNum = {}},
	k/.estore in = \clauseK,
	n/.estore in = \clauseN,
	r/.estore in = \clauseR,
	s/.estore in = \clauseS,
	p/.estore in = \clauseP,
	cnfNum/.estore in = \clauseCNFOverride,
	xorNum/.estore in = \clauseXOROverride,
}

\newcommand\F[1][]{%
	\pgfkeys{/cnfxor, default, #1}%
	F_{\clauseK}(\clauseN, 
	\ifx\clauseCNFOverride\empty\relax
	\clauseR \clauseN
	\else
	\clauseCNFOverride
	\fi
	)
}

\newcommand\Q[1][]{%
	\pgfkeys{/cnfxor, default, #1}%
	Q^{\clauseP}(\clauseN, 
	\ifx\clauseXOROverride\empty\relax
	\clauseS \clauseN
	\else
	\clauseXOROverride
	\fi
	)
}

\newcommand\FQ[1][]{%
	\pgfkeys{/cnfxor, default, #1}%
	\psi^{\clauseP}_{\clauseK}(\clauseN, 
	\ifx\clauseCNFOverride\empty\relax
	\clauseR \clauseN
	\else
	\clauseCNFOverride
	\fi,
	\ifx\clauseXOROverride\empty\relax
	\clauseS \clauseN
	\else
	\clauseXOROverride
	\fi
	)
}

\renewcommand{\P}[1]{\ensuremath{\mathsf{Pr}\left[#1\right]}}
\newcommand{\Var}[1]{\ensuremath{\mathsf{Var}\left[#1\right]}}
\newcommand{\Covar}[2]{\ensuremath{\mathsf{Cov}\left[#1, #2\right]}}
\newcommand{\E}[1]{\ensuremath{\mathsf{E}\left[#1\right]}}

\renewcommand{\L}[0]{\Lambda}
\newcommand{\Lfull}[0]{\Lambda_b(k, r)}
\newcommand{\Lfullest}[0]{\Lambda_b(1/2, k, r)}

\newcommand{\SMT}{\ensuremath{\mathsf{SMT}}}
\newcommand{\SAT}{\ensuremath{\mathsf{SAT}}}

\newcommand{\seq}[2]{\{#1_#2\}_{#2 = 1}^{\infty}}
\newcommand{\ceil}[1]{\lceil #1 \rceil}
\newcommand{\floor}[1]{\lfloor #1 \rfloor}

\newcommand{\CryptoMiniSAT}{\ensuremath{\mathsf{CryptoMiniSAT}}}

\newtheorem{theorem}{Theorem}
\newtheorem{lemma}[theorem]{Lemma}
\newtheorem{corollary}[theorem]{Corollary}

\newtheorem{conjecture}{Conjecture}
\newtheorem{speculation}{Speculation}
\everymath{\textstyle}

\maketitle

\begin{abstract}
	Recent universal-hashing based approaches to sampling and counting crucially depend on the runtime performance of \SAT~solvers on
	formulas expressed as the conjunction of both CNF constraints and variable-width XOR constraints (known as CNF-XOR formulas). In this paper, we present the first study of the runtime behavior of \SAT~solvers equipped with XOR-reasoning techniques on random CNF-XOR formulas. We empirically demonstrate that a state-of-the-art \SAT~solver scales exponentially on random CNF-XOR formulas across a wide range of XOR-clause densities, peaking around the empirical phase-transition location.
	On the theoretical front, we prove that the solution space of a random CNF-XOR formula `shatters' at \emph{all} nonzero XOR-clause densities into well-separated components, similar to the behavior seen in random CNF formulas known to be difficult for many \SAT-solving algorithms.
\end{abstract}

%
\section{Introduction}\label{sec:introduction}
The Boolean-Satisfaction Problem (\SAT) is one of the most fundamental problems in 
computer science, with a wide range of applications arising from diverse areas 
such as artificial intelligence, programming languages, biology and the like~\cite{BHMW09}. While \SAT~is NP-complete, the study of the runtime behavior of \SAT~techniques is a topic of major interest in AI~\cite{LGZC15} owing to its practical usage. Of specific interest
is the behavior of \SAT~solvers on random problems~\cite{CKT91}, motivated by the connection
between the clause density (the ratio of clauses to variables) of a random \SAT~instance and algorithmic
properties of the solution space. Early experiments~\cite{MSL92,CA93,KS94} on random fixed-width CNF formulas (where each clause
contains a fixed number of literals) revealed a surprising phase-transition behavior in the  satisfiability of random formulas: the probability of satisfiability undergoes a precipitous
drop around a fixed density, the location of which depends only on the clause width (for CNF formulas with clause width 3, this occurs around a clause density of 4.26).
Moreover, the runtime of \SAT~solvers (using DPLL and related algorithms) on random CNF formulas was shown to follow an \emph{easy-hard-easy} pattern~\cite{KS94}: the runtime is low when the clause density is very low or very high and peaks near the phase-transition point.

Further analysis of the relationship between the clause density and \SAT~solver runtime revealed a more nuanced picture of the scaling behavior of \SAT~solvers on random fixed-width CNF instances: a secondary phase-transition was observed
within the satisfiable region, where the median runtime transitions from polynomial to exponential in the number of variables~\cite{CDSSV03}. Theoretical analysis of this phenomenon \cite{DMMZ08,MMZ05,ACR11} has shown that the solution
space of a random fixed-width CNF formula undergoes a dramatic `shattering'. When the clause density is small, 
almost all solutions are contained in a single connected-component (where solutions are adjacent if their
Hamming distance is 1). In this region, several algorithms are known to solve 
random fixed-width CNF formulas w.h.p. in polynomial time \cite{Achlioptas09}.
Above a specific clause density the solution space `shatters' into exponentially
many connected-components. Moreover, these clusters are with high probability all linearly separated i.e. the Hamming distance between all pairs of connected-components is
bounded from below by some function linear in the number of variables. This `shattering' of the solution space into linearly separated solutions is known to be difficult for a variety of \SAT-solving algorithms \cite{AM12,Coja11}.

Although this prior work exists on the runtime scaling behavior of \SAT~solvers on random fixed-width CNF formulas and on certain other classes of random constraints, no prior work considers the runtime scaling behavior of \SAT~solvers on formulas composed of both CNF-clauses and XOR-clauses, known as CNF-XOR formulas. Recently, successful hashing-based approaches to the fundamental problems of constrained sampling and counting employ {\SAT} solvers to solve CNF-XOR formulas~\cite{GSS06,CMV13a,ZCSE16,MVCFSFIM16}. The scalability of these hashing-based algorithms crucially depends on the runtime performance of \SAT~solvers in handling CNF-XOR formulas. Although XOR-formulas can be solved individually in polynomial time (using Gaussian Elimination \cite{Sch78}), XOR-formulas are empirically hard \cite{HJKN06} for \SAT~solvers without equivalence reasoning or similar techniques. The rise of applications for CNF-XOR formulas has motivated the development of specialized CNF-XOR solvers, such as \CryptoMiniSAT~\cite{SNC09}, that combine \SAT-solving techniques with algebraic techniques and so can reason about about both the CNF-clauses and XOR-clauses within a single CNF-XOR formula. 

The runtime behavior of these specialized CNF-XOR solvers is an area of active research. Recent work \cite{DMV16} analyzed the satisfiability of random formulas composed of both random $k$-clauses (i.e. CNF-clauses of fixed-width $k$) and random variable-width XOR-clauses (where the width of the XOR-clauses used is stochastic), known as random $k$-CNF-XOR formulas, to begin to demystify the behavior of CNF-XOR solvers. Since the scaling behavior of \SAT~solvers on random $k$-clauses has been analyzed to explain the runtime behavior of \SAT~solvers in practice \cite{Achlioptas09}, we believe that analysis of the scaling behavior of CNF-XOR solvers on random $k$-CNF-XOR formulas is the next step towards explaining the runtime behavior of CNF-XOR solvers in practice and thus explaining the runtime behavior of hashing-based algorithms.

For example, it is widely believed that the performance of CNF-XOR~solvers on CNF-XOR formulas depends on the width of the XOR-clauses. Consequently, recent efforts \cite{Gomes07shortXOR,IMMV16} have focused on designing hashing-based techniques that employ XOR-clauses of smaller width. In this paper, we present empirical evidence that using smaller width XOR-clauses does not necessarily improve the scaling behavior of CNF-XOR~solvers.

The primary contribution of this work is the first empirical and theoretical study of the runtime behavior of CNF-XOR solvers on random $k$-CNF-XOR formulas and on the solution space of random $k$-CNF-XOR formulas. In particular:
\begin{enumerate}
	\item We present (in Section \ref{sec:experiments:scaling}) experimental evidence that the runtime of \CryptoMiniSAT~scales exponentially in the number of variables at many $k$-clause and XOR-clause densities well within the satisfiable region, even when both the CNF and XOR subformulas are separately solvable in polynomial time by \CryptoMiniSAT.
	
	\item We present (in Section \ref{sec:experiments:scaling}) experimental evidence that this exponential scaling peaks around the empirical phase-transition location for random $k$-CNF-XOR formulas, and further that the scaling behavior does \emph{not} monotonically improve as the XOR-clauses get shorter.
	
	\item We prove (in Section \ref{sec:theory}) that the solution space of random variable-width XOR formulas (and therefore of random $k$-CNF-XOR formulas) shatters. We hypothesize that the exponential scaling behavior of random $k$-CNF-XOR formulas within the satisfiable region is caused by this solution space shattering.
\end{enumerate}

%

\section{Notations and Preliminaries} \label{sec:prelims}
Let $X = \{X_1, \cdots, X_n\}$ be a set of propositional variables and 
let $F$ be a formula defined over $X$. A \emph{satisfying assignment} or 
\emph{solution} of $F$ is an assignment of truth values to the variables in $X
$ such that $F$ evaluates to true. The \emph{solution space} of $F$ is the set
of all satisfying assignments. We say that $F$ is \emph{satisfiable} (or \emph{sat.}) if there exists
a satisfying assignment of $F$ and that $F$ is \emph{unsatisfiable} (or \emph{unsat.}) otherwise. 

We describe the solution space of $F$ using terminology from Achlioptas and Molloy \shortcite{AM13}.
Two satisfying assignments $\sigma$ and $\tau$ of $F$ are \emph{$d$-connected}, for a real number $d$, if there exists a sequence of solutions $\sigma, \sigma', \cdots, \tau$ of $F$ such that the Hamming distance of every two successive elements in the sequence is at most $d$. A subset $S$ of the solution space of $F$ is a \emph{$d$-cluster} if every $\sigma, \tau \in S$ is $d$-connected. Two subsets $S$, $S'$ of the solution space of $F$ are \emph{$d$-separated} if every pair $\sigma \in S$ and $\tau \in S'$ is not $d$-connected. Moreover, we say that $F$ is $d$-separated if the Hamming distance between every pair of solutions of $F$ is at least $d$.
 
If $g(n)$ is a function of $n$, we use $O(g(n))$ as shorthand for some function $g'(n) \in O(g(n))$ and use $\Omega(g(n))$ as shorthand for some function $g''(n) \in \Omega(g(n))$ (where the choice of $g'(n)$ and $g''(n)$ is independent of $n$).
 
We use $\P{E}$ to denote the probability of event $E$. 
We say that an infinite sequence of random events $E_1, E_2, \cdots $ occurs 
\emph{with high probability} (denoted, w.h.p.) if $\lim\limits_{n \to \infty} \P{E_n} = 1$. 

A $k$-\emph{clause} (or \emph{CNF-clause}) is the disjunction of $k$ literals out of $\{ X_1, \cdots, X_n\}$, 
with each variable possibly negated. For fixed positive integers $k$ and $n$ and 
a nonnegative real number $r$ (known as the \emph{$k$-clause density}), let the random variable $\F$ 
denote the formula consisting of the conjunction of $\ceil{rn}$ $k$-clauses, each chosen 
uniformly and independently from all $\binom{n}{k}2^k$ possible $k$-clauses over $n$ variables.

The early experiments on $\F$~\cite{MSL92,CA93,KS94} led to the following conjecture:
\begin{conjecture}[Satisfiability Phase-Transition Conjecture]
	For every integer $k \geq 2$, there is a critical ratio $r_k$ such that:
	\begin{enumerate}
		\item If $r < r_k$, then $\F$ is satisfiable w.h.p.
		\item If $r > r_k$, then $\F$ is unsatisfiable w.h.p.
	\end{enumerate}
\end{conjecture}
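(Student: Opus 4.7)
The plan is to split the conjecture into three tasks: an upper bound on $r_k$ via a first-moment argument, a lower bound via second moments or an algorithmic analysis, and finally a sharp-threshold argument to establish that the critical density is genuinely a constant in $k$ rather than a sequence in $n$.

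For the upper bound I would let $N$ count the satisfying assignments of $\F$. Each of the $2^n$ truth assignments satisfies a uniformly random $k$-clause with probability $1-2^{-k}$, and the $\ceil{rn}$ clauses are drawn independently, so $\E{N} = 2^n(1-2^{-k})^{\ceil{rn}}$. Markov's inequality then gives $\P{N \geq 1} \leq \E{N}$, which tends to $0$ whenever $r > 2^k \ln 2$, yielding $r_k \leq 2^k \ln 2$.

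For the lower bound I would first run a simple constructive algorithm such as Unit Clause Propagation and track its trajectory using Wormald's differential-equation method, proving satisfiability w.h.p.\ for $r$ below some density of order $2^k/k$. To push this up to within an additive $O(k)$ of the first-moment bound, I would invoke the weighted second-moment method of Achlioptas and Peres: pick a weight function $w$ on assignments so that $\E{N_w^2}$ and $\E{N_w}^2$ are comparable, then apply the Paley-Zygmund inequality to obtain $\P{N_w > 0} = \Omega(1)$. Friedgut's sharp-threshold theorem can then amplify this constant-probability satisfiability into the required $1-o(1)$, giving $r_k \geq 2^k \ln 2 - O(k)$.

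The decisive obstacle is that Friedgut's theorem only supplies a sharp-threshold \emph{sequence} $r_k(n)$ separating the $1-o(1)$ and $o(1)$ regimes, and does not establish $\lim_{n \to \infty} r_k(n)$. For sufficiently large $k$ this limit can be pinned down via the interpolation method of Franz and Leone (later made rigorous by Bayati, Gamarnik, and Tetali) and the 1RSB cavity-method analysis of Ding, Sly, and Sun, matching the statistical-physics prediction. For small $k \geq 3$, however, the replica-symmetry-breaking picture resists rigorous treatment, so the honest scope of my proposal is the pair of matching bounds above, with the existence of the limiting constant left as the core technical barrier---which is in fact a well-known open problem, and which is why the statement is recorded here as a conjecture rather than a theorem.
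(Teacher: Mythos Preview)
Your proposal is not a proof but a survey of the known partial results, and you correctly recognize this at the end: the statement is presented in the paper as a \emph{conjecture}, not a theorem, and the paper offers no proof of it. Immediately after stating the conjecture the paper simply remarks that it has been established for $k=2$ and for all sufficiently large $k$ (citing Ding--Sly--Sun), while remaining open for small $k\geq 3$. Your summary of the first-moment upper bound, the second-moment and algorithmic lower bounds, Friedgut's sharp-threshold sequence, and the Ding--Sly--Sun result is accurate and in line with that remark; there is nothing further to compare, since the paper does not attempt a proof.
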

The Conjecture has been proven for $k=2$ and for all sufficiently large $k$ \cite{DSS15}. The Conjecture has remained elusive for small values 
of $k \geq 3$, although values for these $r_k$ can be estimated experimentally (e.g., $r_3$ seems to be near $4.26$) and predicted analytically using techniques from statistical physics \cite{MMZ06}. 

When the $k$-clause density is small (e.g. below $O(2^k/k)$) there are algorithms that are known to solve $\F$ with high probability in polynomial time \cite{CM97}.
No algorithm is known that can solve $\F$ in polynomial time when the clause density is larger, even when $\F$ is still expected to have exponentially many solutions~\cite{ACR11}. The solution space geometry of $\F$ can be characterized
in the satisfiable region. In particular, for every $k \geq 8$ there exists some $k$-clause density 
$r$ where w.h.p. $\F$ is satisfiable and almost all of the solution space of $\F$ can be partitioned into exponentially many $O(n)$-clusters such that each pair of clusters is $\Omega(n)$-separated \cite{ACR11}. This `shattering' of the solution space into linearly separated clusters is known to be difficult for a variety of \SAT-solving algorithms \cite{AM12,Coja11}.

An XOR-clause over $n$ variables is the `exclusive or' of either 0 or 1 together with a subset 
of the variables $X_1$, $\cdots$, $X_n$. An XOR-clause including 0 (respectively, 1) evaluates to true if and only if an odd (respectively, even) number of the included variables evaluate to true. For a fixed positive integer $n$ and a nonnegative real number $p$, a \emph{random XOR-clause with variable-probability $p$} is an XOR clause $A$ chosen so that each $X_i$ is included in $A$ independently with probability $p$ and $1$ is included in $A$ independently with probability $1/2$. Note that all $k$-clauses contain \emph{exactly} $k$ variables, whereas the number of variables in an XOR-clause is not fixed; a random XOR-clause chosen with variable-probability $p$ over $n$ variables contains $pn$ variables in expectation.
 
For a fixed positive integer $n$, a nonnegative real number $s$ (known as the \emph{XOR-clause density}), and a nonnegative real number $p$ (known as the \emph{XOR variable-probability}), let the random variable $\Q$ denote the formula consisting of the conjunction of $\ceil{sn}$ XOR-clauses, with each clause an independently chosen random XOR-clause with variable-probability $p$. The solution space geometry of $\Q$ has not been characterized in prior work. There is a related model of random XOR-formulas where every XOR-clause contains a fixed number of variables. In this case, w.h.p. the solution space can be partitioned into a set of $O(\log n)$-clusters such that each pair of clusters is $\Omega(n)$-separated \cite{AM13,IKKM12}.

The random variable $\Q[p=1/2]$ matches the XOR-clauses used in several hashing-based constrained sampling and counting algorithms \cite{CMV13a}. Recent work \cite{ZCSE16} has made use of $\Q$ with $p < 1/2$ for constrained sampling and counting algorithms.


A CNF-XOR formula (respectively, $k$-CNF-XOR formula) is the conjunction of some number of CNF-clauses (respectively, $k$-clauses) and XOR-clauses. For fixed positive integers $k$ and $n$ and fixed nonnegative real numbers $r$ and 
$s$, let the random variable $\FQ$ denote the 
formula consisting of the conjunction of $\ceil{rn}$ $k$-clauses, each chosen uniformly and independently from all possible $k$-clauses over $n$ variables, and $\ceil{sn}$ independently chosen XOR-clauses with variable-probability $p$. There exists a phase-transition in the satisfiability of $\FQ[p={1/2}]$ when the $k$-clause density is small, shown by the following theorem \cite{DMV16}: 
\begin{theorem}
	Let $k \geq 2$.	There is a function $\phi_k(r)$ and a constant $\alpha_k \geq 1$ such that for all $s \geq 0$ and all $r \in [0, \alpha_k)$ (except for at most countably many $r$):
	\begin{enumerate}
		\item If $s < \phi_k(r)$, then w.h.p. $\FQ[p={1/2}]$ is sat.
		\item If $s > \phi_k(r)$, then w.h.p. $\FQ[p={1/2}]$ is unsat.
	\end{enumerate}
\end{theorem}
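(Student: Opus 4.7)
The plan is a first-moment bound for the unsatisfiable direction, a decoupling identity for the second moment that reduces the satisfiable direction to the pure-CNF case, and a sharp-threshold argument to boost constant-probability satisfiability to w.h.p.\ satisfiability. Let $Z$ count the satisfying assignments of $\FQ[p={1/2}]$. Since each fixed assignment satisfies a random $k$-clause with probability $1-2^{-k}$ and a random XOR-clause with $p = 1/2$ with probability exactly $1/2$ (by the random parity bit),
\[
\E{Z} \;=\; 2^n(1-2^{-k})^{\ceil{rn}}\,2^{-\ceil{sn}}.
\]
Define $\phi_k(r) := 1 + r\log_2(1-2^{-k})$. For $s > \phi_k(r)$, $\E{Z}$ decays exponentially, so Markov's inequality gives unsatisfiability w.h.p.

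For the satisfiable direction, the key structural observation is that for any distinct $\sigma \ne \tau$, a random XOR-clause with $p = 1/2$ is satisfied by both with probability exactly $1/4$ (the random parity bit and the Bernoulli$(1/2)$ inclusion of each disagreement variable both contribute a factor of $1/2$). Writing $Y$ for the number of solutions of the pure-CNF formula $\F$, the CNF and XOR contributions decouple at the second moment, yielding
\[
\frac{\E{Z^2}}{\E{Z}^2} \;=\; \frac{2^{\ceil{sn}}}{\E{Y}} \;+\; \frac{\E{Y^2}}{\E{Y}^2} \;-\; \frac{1}{\E{Y}}.
\]
For $s < \phi_k(r)$ the first term is $2^{-\Omega(n)}$, so $\E{Z^2} = O(\E{Z}^2)$ reduces to the \emph{pure-CNF} bound $\E{Y^2} = O(\E{Y}^2)$. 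I would take $\alpha_k$ to be the supremum of densities for which the standard second-moment method on $\F$ succeeds, which is a calculus-of-variations statement that the function $\alpha \mapsto H_2(\alpha) + r\log_2 g_k(\alpha)$, with $g_k(\alpha) = 1 - 2^{1-k} + 2^{-k}(1-\alpha)^k$, attains its global maximum at $\alpha = 1/2$. Classical results for random $k$-SAT give $\alpha_k \geq 1$ for every $k \geq 2$. Paley-Zygmund then yields $\P{Z > 0} = \Omega(1)$.

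To convert constant-probability satisfiability into w.h.p.\ satisfiability, I would invoke Friedgut's sharp-threshold theorem applied to the monotone property (in the number of XOR-clauses, with the CNF part held fixed) of being unsatisfiable. This produces a sharp transition window of width $o(1)$ around some $\phi^*_{k,n}(r)$; combining with the first-moment upper bound and the positive-probability second-moment lower bound pins $\phi^*_{k,n}(r)/n$ to $\phi_k(r)$ in the limit. The qualifier ``except for at most countably many $r$'' is the standard artifact of such sharp-threshold arguments, accommodating exceptional points of discontinuity of the limiting threshold.

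The main obstacle I anticipate is not the decoupling identity but the pure-CNF second-moment estimate itself: the random XOR-clauses do not help concentrate $\E{Z^2}$ over the CNF randomness, so one inherits the full difficulty of the classical random $k$-SAT second-moment analysis, especially at the boundary case $k = 2$ where the satisfiability threshold equals $r = 1$. Verifying the technical hypotheses of Friedgut's theorem in this mixed CNF-XOR random-CSP model is a secondary source of difficulty and is what introduces the countably many exceptional densities.
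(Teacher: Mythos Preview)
The paper does not prove this theorem; it is quoted in the preliminaries and attributed to \cite{DMV16}. Your proposal is essentially the argument of that cited paper: a first-moment upper bound giving $\phi_k(r)=1+r\log_2(1-2^{-k})$, a second-moment lower bound exploiting the fact that at $p=1/2$ random XOR-clauses are pairwise independent (your decoupling identity $\E{Z^2}/\E{Z}^2 = 2^{\lceil sn\rceil}/\E{Y} + \E{Y^2}/\E{Y}^2 - 1/\E{Y}$ is correct and is the crux of that reduction), and then Friedgut's sharp-threshold theorem to upgrade $\P{Z>0}=\Omega(1)$ to w.h.p., which is also where the countable set of exceptional $r$ enters. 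So your plan matches the source the paper relies on.

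Two small points worth tightening. First, when you invoke Friedgut, the monotonicity you want is joint in \emph{both} the CNF and XOR clause counts (unsatisfiability is monotone increasing in each), not with the CNF part ``held fixed''; the cited argument runs the sharp-threshold machinery over the product measure and uses the symmetry of the clause distributions. Second, your identification of $\alpha_k$ with the vanilla $k$-SAT second-moment threshold is right in spirit, but the assertion ``classical results give $\alpha_k\ge 1$ for every $k\ge 2$'' deserves a reference or a calculation rather than an appeal to folklore, since the unweighted second moment for $k$-SAT is known to fail well below the satisfiability threshold; the claim here is only that it succeeds up to density $1$, which is a separate (and easier) computation you should carry out explicitly, especially at $k=2$.
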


%
\section{Experimental Results} \label{sec:experiments:scaling}
To explore empirically the runtime behavior of solvers on randomly constructed $k$-CNF-XOR formulas, we built a prototype implementation in Python that employs the {\CryptoMiniSAT}\footnote{\url{http://www.msoos.org/cryptominisat4/} }~\cite{SNC09} solver to check satisfiability of random $k$-CNF-XOR formulas. We chose {\CryptoMiniSAT} because it is typically used in hashing-based approaches to sampling and counting due to its ability to handle the combination of $k$-clauses and XOR-clauses efficiently~\cite{CFMSV14}. 

The objective of the experimental setup was to empirically determine the scaling behavior, as a function of $n$, in the median runtime of checking satisfiability of $\FQ$ with respect to $r$ (the $k$-clause density), $s$ (the XOR-clause density), and $p$ (the XOR variable-probability) for fixed $k$.

\subsection{Experimental Setup} \label{sec:experiments:scaling:setup}
To uniformly choose a $k$-clause we uniformly selected without replacement 
$k$~out of the variables $\{X_1, \cdots, X_n\}$. For each selected variable $X_i$, 
we include exactly one of the literals $X_i$ or $\neg X_i$ in the $k$-clause, 
each with probability $1/2$. The disjunction of these $k$ literals is 
a uniformly chosen $k$-clause. To choose an XOR-clause with variable-probability $p$, we include 
each variable of $\{X_1, \cdots, X_n\}$ with probability $p$ in a set 
$A$ of variables. We also include in $A$ exactly one of $0$ or $1$, 
each with probability $1/2$. The `exclusive-or' of all elements of $A$ 
is a random XOR-clause with variable-probability $p$.

In all experiments we fix the clause length $k=3$. The $3$-clause density $r$, the XOR-clause density $s$, and the XOR variable-probability $p$ varied in each experiment, as follows:
\begin{itemize}
	\item To study the effect of the $3$-clause and XOR-clause densities on the runtime, we ran 124 experiments with $r \in \{1, 2, 3, 4\}$, $p = 1/2$, and $s$ ranging from 0.3 to 0.9 in increments of 0.02. We present results from these experiments in Section \ref{sec:experiments:scaling:results}.
	
	\item To study the effect of the XOR variable-probability on the runtime, we ran 1295 experiments with $r = 2$,
	$p$ ranging from 0.02 to 0.94 in increments of 0.005, and $s$ ranging from 0.3 to 0.9 in increments of 0.1. We chose these clause-densities so that approximately half of the clause-densities were in the satisfiable region. We present selected results from these experiments in Section \ref{sec:experiments:scaling:alternative}.
\end{itemize}

To determine the scaling behavior of \CryptoMiniSAT~on random $k$-CNF-XOR formulas with parameters $k$, $r$, $s$, and $p$, we determined a number of variables $N$ so that the median runtime of \CryptoMiniSAT~on $\FQ[n={N}]$ was as large as possible while remaining below the set formula timeout. We then allowed $n$ to range from $10$ to $N$ in increments of $1$.
For each $n$, we used {\CryptoMiniSAT} to check the satisfiability of $100$ formulas sampled from $\FQ$ by constructing the conjunction of $\ceil{rn}$ $k$-clauses and $\ceil{sn}$ XOR-clauses, with each clause chosen independently as described above. The solving of each formula was individually timed. The median runtime is an estimate for the median runtime of \CryptoMiniSAT~on $\FQ$.

Finally, we used the curve\_fit function in the Python scipy.optimize\footnote{\url{https://www.scipy.org/}} library to determine the relationship between the number of variables $n$ and the medium runtime of {\CryptoMiniSAT}~on $\FQ$. We attempted to fit linear ($an+b$), quadratic ($an^2 +bn + c$), cubic ($an^3 + bn^2 + cn + d$), and exponential ($\beta 2^{\alpha n}$) curves; the best-fit curve was the curve with the smallest mean squared error.




Each experiment was run on a node within a high-performance computer cluster. 
These nodes contain 12-processor cores at 2.83 GHz each with 48 GB of RAM per node. 
Each formula was given a timeout of 10 seconds. We were not able to run informative experiments for formulas with higher timeouts; as the runtime of \CryptoMiniSAT~increases past 10 seconds, the variance in runtime significantly increases as well and so experiments require a number of trials at each data point far beyond our computational abilities.

\begin{figure}
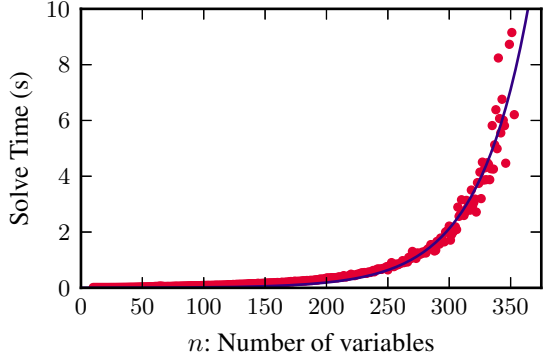

	\centering
\begingroup%
\makeatletter%
%
\makeatother%
\endgroup%


	\vspace{0in}
	\caption{Runtime for $3$-CNF-XOR formulas at $3$-clause density $r=2$, XOR-clause density $s=0.3$, and XOR variable-probability $p=1/2$, together with the best-fit curve $0.00152 \cdot 2^{0.0348n}$. \label{fig:scaling_r4_s03}}
\end{figure}

\begin{figure}
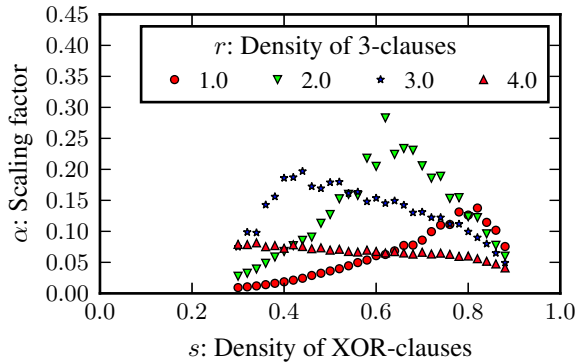

	\centering
\begingroup%
\makeatletter%
%
\makeatother%
\endgroup%


	\vspace{0in}
	\caption{Exponential scaling factor for $3$-CNF-XOR formulas with $3$-clause density $r\in\{1, 2, 3, 4\}$ and XOR variable-probability $p=1/2$. The scaling factor $\alpha$ is the exponent of the best-fit line for the runtime of $\FQ[k=3, p={1/2}]$.  \label{fig:scaling_r23_scalefactor_p5}}
\end{figure}

\subsection{Results on the Impact of XOR-clause Density} \label{sec:experiments:scaling:results}
We analyzed the median runtime of {\CryptoMiniSAT}~on $\FQ[k=3, p=1/2]$ for a fixed $r$ and $s$ as a function of the number of variables $n$.

Figure \ref{fig:scaling_r4_s03} plots the median runtime at $k=3$, $r=2$, and $s=0.3$ as a function of $n$, together with the best-fit curve. The x-axis indicates the number of variables $n$. The y-axis indicates the median runtime of \CryptoMiniSAT on $\FQ[k=3, r=4, s=0.3, p=1/2]$. We observe that
the median runtime increases exponentially in the number of variables. In this case, the best-fit curve is the exponential function $0.00152 \cdot 2^{0.0348n}$.

In fact, for all experiments with $r \in \{1, 2, 3, 4\}$ and $0.3 \leq s \leq 0.9$ the best-fit curve to the median runtime as a function of $n$ is proportional to an exponential function of the form $2^{\alpha n}$ for some $\alpha > 0$. Figure \ref{fig:scaling_r23_scalefactor_p5} plots the scaling behavior with respect to $n$ of 
the median runtime of {\CryptoMiniSAT}~on $\FQ[p=1/2, k=3]$. The x-axis indicates the density of XOR-clauses $s$. The legend indicates the density of $3$-clauses $r$. The value $\alpha$, known as the \emph{scaling factor}, shown on the y-axis indicates that the best-fit curve to the median runtime of $\FQ[p=1/2,k=3]$ as a function of $n$ was proportional to $2^{\alpha n}$. We observe that the scaling factor is closely related to the $3$-clause density and the XOR-clause density: when the XOR-clause density is low or high the scaling factor is low, and the scaling factor peaks at some intermediate value. We observe peaks in the scaling factor near ($r=1$, $s=0.8$), ($r=2$, $s=0.6$) and ($r=3$, $s=0.4$). Empirically, there is a phase-transition in the satisfiability of random $3$-CNF-XOR formulas exactly at these locations \cite{DMV16}. Thus we observe a peak in the runtime scaling factor around the $3$-CNF-XOR phase-transition, similar to the peak observed in the runtime factor for $\F$ around the $k$-CNF phase-transition \cite{CDSSV03}.

Our experimental results do not describe extremely low $3$-clause densities and XOR-clause densities (for example, when the XOR-clause density is below $0.3$). At such low densities, conclusive evidence of polynomial or exponential behavior requires computational power beyond our capabilities.

%
\subsection{Results on the Impact of XOR-clause Width} \label{sec:experiments:scaling:alternative}
We next analyzed the median runtime of {\CryptoMiniSAT}~on $\FQ[k=3, r=2]$ for a fixed $p$ and $s$ as a function of the number of variables $n$. For lack of space, we present results only for the experiments with $s \in \{0.4, 0.7\}$ \footnote{
	The data from all experiments is available at \url{http://www.cs.rice.
	edu/CS/Verification/Projects/CUSP/}
}.

\begin{figure}
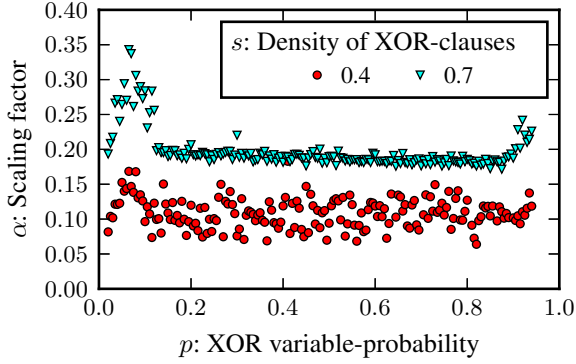

	\centering
\begingroup%
\makeatletter%
%
\makeatother%
\endgroup%


	\vspace{0in}
	\caption{Exponential scaling factor for $3$-CNF-XOR formulas with $3$-clause density $r=2$ and XOR-clause densities $s=0.4~\text{and}~0.7$. \label{fig:scaling_r2_scalefactor_p}}
\end{figure}

Figure \ref{fig:scaling_r2_scalefactor_p} plots the scaling behavior with respect to $n$ of the medium runtime of {\CryptoMiniSAT}~on $\FQ[k=3, r=2]$. The x-axis indicates the XOR variable-probability $p$. The legend indicates the density of XOR-clauses $s$. The value $\alpha$ shown on the y-axis indicates that the best-fit curve to the median runtime of $\FQ[k=3, r=2]$ as a function of $n$ was proportional to $2^{\alpha n}$. Note that $(r=2, s=0.4)$ is in the satisfiable region and $(r=2, s=0.7)$ is in the unsatisfiable region when $p=1/2$. 
We observe that the behavior of the scaling factor is independent of the XOR variable-probability, $p$, when $p \in (0.15, 0.9)$. As $p$ decreases below $0.15$, the scaling factor increases to a peak when $p \in (0.05, 0.1)$, then decreases. We also observe a peak in the scaling factor when $p > 0.9$.

In summary, we observe that the runtime of \CryptoMiniSAT~scales exponentially in the number of variables on random $3$-CNF-XOR formulas across a wide range of densities and XOR variable-probabilities. The exponential scaling behavior peaks near the empirical location of the $3$-CNF-XOR phase-transition. The exponential scaling behavior is constant when the XOR variable-probability is between $0.15$ and $0.9$ and the scaling behavior peaks when the XOR variable-probability is between $0.05$ and $0.1$, independent of the XOR-clause density.

\section{The Separation of the XOR Solution Space}
\label{sec:theory}
In the case of $k$-CNF formulas, the exponential runtime scaling of DPLL-solvers (in the satisfiable region) is closely connected to the `shattering' of the solution space into exponentially many $\Omega(n)$-separated clusters w.h.p.~\cite{AM12,Coja11}. Prior work has shown that the solution space of fixed-width XOR-clauses has similar behavior; unfortunately, the proof techniques used in this prior work do not easily extend to the solution space of $\Q$. In particular, the proof techniques for XOR-clauses of fixed-width $\ell$ heavily involve properties of either random $\ell$-uniform hypergraphs \cite{AM13} or random factor graphs with factors of constant degree $\ell$ \cite{IKKM12}. If the width of each XOR-clause is stochastic, as in $\Q$, rather than fixed, the corresponding hypergraphs are not uniform and the corresponding factor graphs do not have factors of constant degree.

Nevertheless, we show in Theorem \ref{thm:separation} that all solutions of a random XOR-formula are w.h.p. $\Omega(n)$-separated (as long as the variable-probability decreases slowly enough as a function of $n$). This is a stronger separation than the separation seen in the case of $k$-CNF formulas and fixed-width XOR-formulas, where there may be clusters of nearby solutions.
\begin{theorem}
	\label{thm:separation}
	Let $s \in (0, 1)$, $\rho > 2$, and $f(n)$ be a nonnegative function. If $\rho \frac{\log(sn)}{sn} \leq f(n) \leq 1/2$ for all large enough $n$, then $\Q[p=f(n)]$ is w.h.p. $\Omega(n)$-separated.
\end{theorem}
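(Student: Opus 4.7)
The plan is to reduce the question to a linear-algebra problem over $\mathrm{GF}(2)$. Identify the XOR-formula with a random matrix $M \in \{0,1\}^{sn\times n}$ whose entries are i.i.d.\ Bernoulli($f(n)$), together with a uniformly random right-hand side vector. For any two solutions $\sigma,\tau$ of the formula, $\sigma \oplus \tau$ lies in $\ker M$, so the set of pairwise Hamming distances among solutions is contained in the weight spectrum of nonzero elements of $\ker M$. It therefore suffices to exhibit a constant $c > 0$ for which w.h.p.\ $\ker M$ has no nonzero vector of Hamming weight less than $cn$.

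Next I would apply the first-moment method. For a fixed $v$ with $|v|=d$, each row of $M$ annihilates $v$ iff a $\mathrm{Binomial}(d,p)$ is even, which happens with probability $\tfrac{1+(1-2p)^d}{2}$. Independence across rows gives $\Pr[v \in \ker M] = \bigl(\tfrac{1+(1-2p)^d}{2}\bigr)^{sn}$, and the union bound over all $v$ of weight at most $cn-1$ produces
\[
S_n \;:=\; \sum_{d=1}^{cn-1}\binom{n}{d}\left(\frac{1+(1-2p)^d}{2}\right)^{sn}.
\]
A one-line convexity check shows $\tfrac{1+x}{2}\le e^{-(1-x)/2}$ for $x\in[0,1]$, so the summand is at most $\binom{n}{d}\exp\bigl(-\tfrac{sn}{2}(1-(1-2p)^d)\bigr)$. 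I would then split $S_n$ at $d^{\star} = 1/(2p)$. For $1 \le d \le d^{\star}$, $2pd \le 1$ so $1-(1-2p)^d \ge pd$ (from $1-e^{-x}\ge x/2$ on $[0,1]$); combined with $\binom{n}{d}\le (en)^d$, the $d$th term is at most $\exp\bigl(d(1+\log n - snp/2)\bigr)$, and the hypothesis $snp \ge \rho \log(sn)$ with $\rho>2$ forces the bracket below $-\gamma \log n$ for some constant $\gamma>0$ once $n$ is large, so this portion of $S_n$ is a geometric series $\sum_d n^{-\gamma d} = o(1)$. For $d^{\star} \le d \le cn$, $1-(1-2p)^d \ge 1 - e^{-1} > 1/2$, and the entropy bound $\binom{n}{d}\le 2^{nH(c)}$ (valid for $c \le 1/2$) gives a per-term bound of $2^{nH(c)} e^{-sn/4}$; choosing $c$ small enough that $H(c) < s/(4\log 2)$ makes each term exponentially small, and summing over at most $n$ values still yields $o(1)$.

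The main obstacle is the small-$d$ portion of the sum. For $d=1$ alone one only needs $snp > \log n$, which corresponds to $\rho > 1$; but for $d=2$ the binomial contribution $\binom{n}{2} \approx n^2$ already forces $snp > 2\log n$, which matches the sharp hypothesis $\rho > 2$. Beyond this delicate near-threshold calculation, the split at $1/(2p)$ and the entropy bound handle the medium and larger values of $d$ in essentially routine fashion.
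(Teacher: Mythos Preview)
Your approach is essentially the same as the paper's: both reduce to the first-moment bound
\[
\sum_{d=1}^{cn}\binom{n}{d}\left(\frac{1+(1-2f(n))^d}{2}\right)^{sn}
\]
via the observation that two solutions differing exactly on a set $A$ force every XOR-clause to contain an even number of variables from $A$ (your kernel formulation over $\mathrm{GF}(2)$ is an equivalent rephrasing). The only substantive difference is in how the sum is shown to vanish: the paper outsources this step to an external lemma (Lemma~7 of \cite{ZCSE16}), whereas you give a direct, self-contained estimate by splitting at $d^{\star}=1/(2p)$ and applying the entropy bound on the tail. Your route is more elementary; both end up needing $\rho>2$ for the same reason.

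One minor inaccuracy in your closing remarks: you claim the $d=2$ term ``forces $snp>2\log n$,'' but the exact $d=2$ contribution is $\binom{n}{2}(1-2p+2p^2)^{sn}\approx \tfrac{n^2}{2}e^{-2snp}$, which, like $d=1$, only requires $\rho>1$ to vanish. The threshold $\rho>2$ in your argument (and in the paper's cited lemma) is an artifact of the inequality $1-(1-2p)^d\ge pd$, which loses a factor of two relative to the true leading behavior $1-(1-2p)^d\approx 2pd$; it is not forced by any particular $d$. This does not affect the correctness of your proof of the theorem as stated, only the sharpness discussion.
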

\begin{proof}
	This follows directly from Lemma \ref{lemma:separation:logn_to_separation}. The proof of this lemma appears in Section \ref{sec:theory:proofs}.
\end{proof}

Notice that Theorem \ref{thm:separation} allows the XOR variable-probability to depend on the number of variables. In particular, the XOR variable-probability can decrease as a function of $n$. Theorem \ref{thm:separation} does not characterize the solution space of XOR-formulas when the variable-probability decreases faster than $2 \frac{\log(sn)}{sn}$ as a function of $n$. It is possible that the solution space is still $\Omega(n)$-separated in this case, or that clusters of solutions can be found. We leave this for future work.

In Section \ref{sec:experiments:scaling}, we focused on an XOR variable-probability model that is independent of $n$; this XOR variable-probability is an important special case of the above general theorem. In particular, if the XOR variable-probability is some constant $p \in (0, 1/2]$ then the solution space of a random XOR-formula with variable-probability $p$ is $\Omega(n)$-separated. We highlight this fact as Corollary \ref{thm:separation:cases}.

\renewcommand{\theenumi}{(\alph{enumi})}
\begin{corollary}
	\label{thm:separation:cases}
	For all $s \in (0, 1)$ and $p \in (0, 1/2]$, $\Q$ is w.h.p. $\Omega(n)$-separated.
\end{corollary}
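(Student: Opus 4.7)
The plan is to obtain this corollary as a direct specialization of Theorem \ref{thm:separation} to the constant-probability function $f(n) = p$. Fix arbitrary $s \in (0,1)$ and $p \in (0, 1/2]$. The upper-bound hypothesis $f(n) \leq 1/2$ of the theorem is immediate from the assumption $p \leq 1/2$. For the lower-bound hypothesis, I would choose any $\rho > 2$ (for concreteness, $\rho = 3$) and note that $sn \to \infty$ forces $\log(sn)/(sn) \to 0$; hence $\rho \log(sn)/(sn) \leq p$ holds for all sufficiently large $n$.

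Both hypotheses of Theorem \ref{thm:separation} are therefore satisfied by the constant function $f(n) = p$, and the theorem applies to $\Q[p=f(n)]$, which coincides with $\Q$. The conclusion of the theorem is precisely that $\Q$ is w.h.p. $\Omega(n)$-separated, matching the corollary.

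There is essentially no obstacle: the only step with any content is recognizing that the hypothesis of Theorem \ref{thm:separation} is written to allow $f(n)$ to decrease with $n$ down to the $\log(sn)/(sn)$ scale, and that the constant-probability regime $p \in (0, 1/2]$ sits comfortably above that scale for large $n$. All technical work is encapsulated in Theorem \ref{thm:separation} itself (and in the underlying Lemma \ref{lemma:separation:logn_to_separation} referenced in its proof), so the corollary reduces to this short asymptotic check.
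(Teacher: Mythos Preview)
The proposal is correct and takes essentially the same approach as the paper: the paper's proof is the single line ``This follows from Theorem \ref{thm:separation} with $f(n) = p$,'' and your argument is precisely this specialization, with the routine verification of the hypotheses spelled out.
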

\begin{proof}
	This follows from Theorem \ref{thm:separation} with $f(n) = p$.
\end{proof} 

Corollary \ref{thm:separation:cases} also implies that $\FQ = \F\land \Q$ is w.h.p. $\Omega(n)$-separated. Since the separation of the $k$-CNF solution space is closely connected to the exponential scaling of \SAT~solvers, we hypothesize that the exponential scaling of \CryptoMiniSAT~we observed in Section \ref{sec:experiments:scaling} at many XOR-clause densities and XOR variable-probabilities is closely connected to the $\Omega(n)$-separation of $k$-CNF-XOR formulas shown in Corollary \ref{thm:separation:cases} at all nonzero XOR-clause densities and XOR variable-probabilities (below $1/2$).

\subsection{Proofs}
\label{sec:theory:proofs}
In this section we establish Theorem \ref{thm:separation}, which follows directly from Lemma \ref{lemma:separation:logn_to_separation}.
To do this, notice that if two solutions of $\Q$ differ exactly on a set of variables $A$ then every XOR-clause in $\Q$ must contain an even number of variables from $A$. We bound from above the probability that a random XOR-clause chosen with variable-probability $p$ contains an even number of variables from $A$. By summing this bound across all sets $A$ containing no more than $\lambda n$ variables for some constant $\lambda$, we bound the probability that two solutions to $\Q$ differ in no more than $\lambda n$ variables.

The following lemma presents an elementary result in probability theory. We use this result in Lemma \ref{lemma:sum_to_separation} to bound the probability that a random XOR-clause chosen with variable-probability $p$ has an even number of variables from a set $A$.
\begin{lemma}\label{lemma:bernoulli}
Let $N$ be a positive integer and let $p$ be a real number with $0 \leq p \leq 1$. If $B_1, B_2, \cdots, B_N$ are independent Bernoulli random variables with parameter $p$, then $\P{\sum_{i=1}^N B_i~\text{is even}} = 1/2 + 1/2(1 - 2p)^N$.
\end{lemma}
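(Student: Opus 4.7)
The plan is to prove this parity identity by a generating-function style computation using the random variable $(-1)^{B_i}$, which neatly converts the parity of a sum into a product of expectations via independence. This is a classical and short argument, and I expect the main work to be checking a single one-line expectation and then reconciling the two ways of expressing $\mathbb{E}[(-1)^{\sum_i B_i}]$.

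First I would introduce $Y = \sum_{i=1}^N B_i$ and observe that $(-1)^Y$ equals $+1$ when $Y$ is even and $-1$ when $Y$ is odd, so
\[
\mathbb{E}\bigl[(-1)^Y\bigr] \;=\; \P{Y \text{ is even}} - \P{Y \text{ is odd}} \;=\; 2\,\P{Y \text{ is even}} - 1.
\]
Next I would compute $\mathbb{E}[(-1)^Y]$ a second way. For each $i$, since $B_i$ takes value $0$ with probability $1-p$ and $1$ with probability $p$, we have $\mathbb{E}[(-1)^{B_i}] = (1-p) - p = 1 - 2p$. By independence of the $B_i$, $(-1)^Y = \prod_{i=1}^N (-1)^{B_i}$ is a product of independent random variables, so $\mathbb{E}[(-1)^Y] = (1-2p)^N$.

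Equating the two expressions and solving for $\P{Y \text{ is even}}$ yields $\P{Y \text{ is even}} = \tfrac{1}{2} + \tfrac{1}{2}(1-2p)^N$, as claimed. A backup approach, if one prefers to avoid the signed-indicator trick, is a one-line induction on $N$: letting $P_N = \P{Y_N \text{ is even}}$, conditioning on $B_{N+1}$ gives the recurrence $P_{N+1} = (1-p)P_N + p(1 - P_N) = (1-2p)P_N + p$, and one checks that the closed form $\tfrac{1}{2} + \tfrac{1}{2}(1-2p)^N$ satisfies both this recurrence and the base case $P_1 = 1-p$. There is no real obstacle here; the only thing to be a bit careful about is the boundary case $p = 1/2$, where $(1-2p)^N = 0$ and the formula correctly reduces to $1/2$, and the case $p \in \{0,1\}$, both of which are handled uniformly by either proof.
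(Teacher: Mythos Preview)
Your proof is correct. Your primary argument via the signed indicator $(-1)^{B_i}$ is genuinely different from the paper's proof: the paper proceeds by the one-step recurrence you list as your backup, setting $a_0 = 1$ and deriving $a_N = (1-p)a_{N-1} + p(1-a_{N-1})$, then verifying the closed form. Your generating-function computation has the advantage of being non-inductive and one line once independence is invoked, and it makes the role of $1-2p$ as the ``Fourier coefficient'' $\mathbb{E}[(-1)^{B_i}]$ transparent; the recurrence approach is slightly more pedestrian but has the virtue of requiring no auxiliary random variable. Your backup is essentially the paper's argument (with base case $P_1$ instead of $a_0$, which is immaterial).
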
  
\begin{proof}
	Fix $p \in [0, 1]$. For all $N \geq 0$, let $a_N$ be the probability that the sum of $n$ independent Bernoulli random variables with parameter $p$ is even. Then $a_0 = 1$ and $a_N = (1-p)a_{N-1} + p(1 - a_{N-1}) = p + a_{N-1} - 2pa_{N-1}$ for all $N \geq 1$. It follows that $a_N = 1/2 + 1/2 (1-2p)^N$.
\end{proof}

The following lemma shows that the sum of these probabilities across all sets whose size is smaller than $\lambda n$ goes to 0 in the limit as $n \to \infty$ when the XOR variable-probability is proportional to $\log(sn)/(sn)$.

\begin{lemma}
	\label{lemma:sum:logn}
	Let $\alpha, \delta \in (0, 1)$, $m = \alpha n$, $\kappa > -\frac{\log(2/(1+\delta) - 1)}{\log(1+\delta)}$ and $\lambda^* < 1/2$ such that $-\lambda^* \log(\lambda^*) - (1 - \lambda^*)\log(1 - \lambda^*) = \alpha \log(1 + \delta)$. Then for all $\lambda < \lambda^*$:
	$$\lim_{n \to \infty} \sum_{w=1}^{\lambda n} {{n} \choose {w}} \left(\frac{1}{2} + \frac{1}{2}\left(1 - 2\kappa \frac{\log m}{m}\right)^w\right)^m = 0$$
\end{lemma}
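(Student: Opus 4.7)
The plan is to bound each summand $T_w := \binom{n}{w}\left(\frac{1}{2} + \frac{1}{2}(1-\epsilon)^w\right)^m$ with $\epsilon := 2\kappa\log m / m$, then split the range $1 \leq w \leq \lambda n$ at a cutoff $w_\star := (C_1 + \eta)/\epsilon$, where $C_1 := -\log\log(2/(1+\delta))$ and $\eta > 0$ is a small constant to be chosen at the end. The starting point for both halves is the uniform inequality $(1-\epsilon)^w \leq e^{-\epsilon w}$, which gives $T_w \leq \binom{n}{w}\left(\frac{1+e^{-\epsilon w}}{2}\right)^m$. The qualitative point is that when $\epsilon w$ is small the inner factor is close to $1$ and the combinatorial smallness of $\binom{n}{w}$ must do the work, whereas for large $\epsilon w$ the inner factor approaches $\tfrac12$ and the $m$-th power delivers the decay.

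For the tail $w > w_\star$, I would use the entropy bound $\binom{n}{w} \leq \binom{n}{\lfloor \lambda n\rfloor} \leq e^{nH(\lambda)}$ (valid because $\lambda < \tfrac{1}{2}$) together with the estimate $\log((1+y)/2) \leq y - \log 2$. Since $H(\lambda) < H(\lambda^*) = \alpha \log(1+\delta)$ by hypothesis, this yields $\log T_w \leq -\alpha n\log(2/(1+\delta)) + \alpha n\,e^{-\epsilon w}$. For $w > w_\star$ one has $e^{-\epsilon w} < e^{-C_1 - \eta} = \log(2/(1+\delta))\,e^{-\eta}$, so the exponent is at most $-\alpha n\log(2/(1+\delta))(1 - e^{-\eta}) < 0$, and summing over at most $\lambda n$ such $w$ contributes $o(1)$.

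For the initial segment $1 \leq w \leq w_\star$, I would use $\binom{n}{w} \leq n^w$ combined with the sharper convexity estimate $\log((1+y)/2) \leq -(1-y)/2$ (valid for $y \in [0,1]$), giving $q_w^m \leq \exp(-\tfrac{m}{2}(1 - e^{-\epsilon w}))$. Because $\epsilon w \leq C_1 + \eta$ throughout this range and $(1-e^{-x})/x$ is decreasing on $[0,\infty)$, it is uniformly bounded below by $c_3 := (1-e^{-(C_1+\eta)})/(C_1+\eta) > 0$, so $q_w^m \leq \exp(-c_3 m\epsilon w/2) = m^{-c_3\kappa w}$. Therefore $T_w \leq \bigl(n \cdot m^{-c_3\kappa}\bigr)^w = \bigl(n^{1-c_3\kappa}/\alpha^{c_3\kappa}\bigr)^w$, a geometric sequence whose common ratio tends to $0$ as $n\to\infty$ provided $c_3\kappa > 1$, and the geometric sum is then of order $n^{1-c_3\kappa} \to 0$.

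The main obstacle is to verify that the hypothesis $\kappa > \kappa_0 := -\log((1-\delta)/(1+\delta))/\log(1+\delta)$ forces $c_3\kappa > 1$ for some admissible choice of $\eta$. Letting $\eta \to 0$, this requirement becomes $\kappa > \kappa_1 := C_1/(1 - \log(2/(1+\delta)))$, so it suffices to establish the elementary inequality $\kappa_0 > \kappa_1$ for every $\delta \in (0,1)$. Substituting $t := \log(2/(1+\delta)) \in (0,\log 2)$ (so $\log(1+\delta) = \log 2 - t$ and $(1-\delta)/(1+\delta) = e^t - 1$) turns the inequality into $-\log(e^t - 1)(1 - t) > -\log t \cdot (\log 2 - t)$, with both sides positive; I would verify it by computing the asymptotics at $t \to 0^+$ (both sides behave like $|\log t|$, but the left has coefficient $1$ versus $\log 2 < 1$ on the right) and at $t \to (\log 2)^-$ (a local L'H\^{o}pital-style comparison), together with a monotonicity argument on the interior. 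Once $\kappa_0 > \kappa_1$ is in hand, continuity in $\eta$ lets us choose $\eta > 0$ small enough that $c_3\kappa > 1$, and combining the two regime bounds gives the stated limit.
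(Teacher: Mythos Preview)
The paper does not prove this lemma at all: it simply cites Lemma~7 of \cite{ZCSE16}. So there is no in-paper argument to compare against; your proposal is an attempt to supply a self-contained proof where the paper defers to an external reference.

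Your two-regime split and the tail estimate are correct. The gap is in the initial segment: everything there reduces to the one-variable inequality $\kappa_0>\kappa_1$, i.e.\ $-\log(e^t-1)\,(1-t)>-\log t\,(\log 2-t)$ for $t\in(0,\log 2)$, but your ``endpoint asymptotics together with a monotonicity argument on the interior'' is not a proof. Matching leading terms at $t\to 0^+$ and $t\to(\log 2)^-$ does not exclude a sign change in between, and you never specify what quantity is monotone or why. The inequality does appear to hold numerically, so the strategy is likely salvageable, but as written the step that carries the whole weight is only gestured at. A cleaner route that sidesteps this inequality entirely is a three-way split: cut the initial segment again at $w_1=c/\epsilon$ for a small constant $c\in(0,C_1)$. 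For $w\le w_1$ your own bound gives $T_w\le\bigl(n\,m^{-\kappa(1-e^{-c})/c}\bigr)^w$ with exponent $\kappa(1-e^{-c})/c$ as close to $\kappa>\kappa_0>2$ as you like, hence $>1$ for small $c$, and the geometric sum is $o(1)$. For $w_1<w\le w_\star$ one has $q_w^m\le\bigl((1+e^{-c})/2\bigr)^m=e^{-\Theta(n)}$, while $\binom{n}{w}\le\binom{n}{w_\star}\le e^{nH(w_\star/n)}$ with $w_\star/n=O(1/\log n)$ and hence $nH(w_\star/n)=O(n\log\log n/\log n)=o(n)$; the exponential decay dominates and the at most $w_\star$ such terms contribute $o(1)$. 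With this extra cut the comparison of $\kappa_0$ against $\kappa_1$ never arises.
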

\begin{proof}		
	This proof is given as \textbf{Lemma 7} of \cite{ZCSE16}.		
\end{proof}

The following lemma allows us to show that the XOR solution-space is $g(n)$-separated if the XOR variable-probability is $f(n)$ for some functions $f$ and $g$ provided that the sum of probability of all sets of variables whose size is below $g(n)$ goes to 0. In particular, in Lemma \ref{lemma:separation:logn_to_separation} we use this lemma with $f(n) \propto \log(sn)/(sn)$ and $g(n) \in \Omega(n)$ to show that the solution-space of $\Q[p=f(n)]$ is $\Omega(n)$-separated.
\begin{lemma}
	\label{lemma:sum_to_separation}
	Let $f(n)$ and $g(n)$ be nonnegative functions with $f(n) \leq 1$ for all sufficiently large $n$. If $$\lim_{n \to \infty} \sum_{w=1}^{g(n)} {{n} \choose {w}} \left(\frac{1}{2} + \frac{1}{2}\left(1-2f(n)\right)^w\right)^{sn} = 0$$ then w.h.p. all solutions of $\Q[p={f(n)}]$ are $g(n)$-separated.	
\end{lemma}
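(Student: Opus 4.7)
The plan is a straightforward first-moment argument via a union bound over all candidate ``difference sets'' between pairs of solutions.

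First, I would isolate the key structural observation: if $\sigma,\tau$ are distinct solutions of $\Q[p=f(n)]$ and $A \subseteq \{X_1,\dots,X_n\}$ is the set of variables on which they disagree (so $|A|$ is their Hamming distance), then every XOR-clause of the formula must contain an even number of variables from $A$. The reason is that passing from $\sigma$ to $\tau$ flips exactly the variables in $A$, and the truth value of an XOR-clause flips iff an odd number of its variables are flipped; hence preserving the truth of every clause forces every clause to have even intersection with $A$.

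Next, I would fix $A$ with $|A|=w$ and compute the probability that a single random XOR-clause chosen with variable-probability $f(n)$ has even intersection with $A$. Each variable of $A$ is included independently with probability $f(n)$, so the intersection size is a sum of $w$ i.i.d.\ Bernoulli$(f(n))$ variables. By Lemma~\ref{lemma:bernoulli}, this sum is even with probability $\tfrac{1}{2} + \tfrac{1}{2}(1-2f(n))^w$. Since the $\lceil sn \rceil$ XOR-clauses of $\Q[p=f(n)]$ are chosen independently, the probability that \emph{every} clause has even intersection with $A$ is $\bigl(\tfrac{1}{2} + \tfrac{1}{2}(1-2f(n))^w\bigr)^{\lceil sn \rceil}$.

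Finally, I would apply a union bound over $w \in \{1,\dots,g(n)\}$ and over the $\binom{n}{w}$ subsets of size $w$. The event that $\Q[p=f(n)]$ fails to be $g(n)$-separated implies the existence of some nonempty $A$ with $|A| \le g(n)$ such that every clause has even intersection with $A$, so
\[
\P{\Q[p=f(n)] \text{ is not } g(n)\text{-separated}}
\;\le\; \sum_{w=1}^{g(n)} \binom{n}{w}\left(\tfrac{1}{2} + \tfrac{1}{2}(1-2f(n))^w\right)^{\lceil sn \rceil}.
\]
Since $\tfrac{1}{2} + \tfrac{1}{2}(1-2f(n))^w \le 1$ and $\lceil sn \rceil \ge sn$, replacing $\lceil sn \rceil$ by $sn$ in the exponent only enlarges each summand, producing exactly the expression in the hypothesis, which tends to $0$ as $n\to\infty$.

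There is no serious obstacle here; the argument is essentially routine once the structural observation is made. The one subtlety worth stating explicitly is the direction of implication in that step: we are upper bounding the probability of the compound event ``two distinct solutions exist at Hamming distance exactly $w$'' by the simpler event ``every clause has even intersection with some fixed $A$ of size $w$'', which is legitimate because the former implies the latter (the latter may additionally hold vacuously when the formula has no solutions at all). The minor bookkeeping around $\lceil sn \rceil$ versus $sn$ is absorbed by the monotonicity observation above.
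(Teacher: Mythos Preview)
Your proposal is correct and follows essentially the same approach as the paper: both arguments make the structural observation that a pair of solutions differing exactly on $A$ forces every XOR-clause to have even intersection with $A$, invoke Lemma~\ref{lemma:bernoulli} to compute the single-clause probability, use independence across the $\lceil sn\rceil$ clauses, and then take a union bound over all nonempty $A$ with $|A|\le g(n)$. You are in fact slightly more careful than the paper about the direction of implication (the paper states an ``if and only if'' that is not quite literally true when the formula is unsatisfiable, whereas you correctly note that only the forward implication is needed for the upper bound).
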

\begin{proof}
	Let the random variable $D$ be 1 if $\Q[p={f(n)}]$ has two solutions with a Hamming distance less than or equal to $g(n)$ and 0 otherwise. We would like to prove that $\lim_{n \to \infty} \P{D = 1} = 0$.
		
	For all nonempty subsets of variables $A \subseteq X$, let the random variable $D(A)$ be 1 if $\Q$ has a pair of solutions that differ exactly on the variables of $A$ and 0 otherwise. Then $D(A) = 1$ if and only if each XOR-clause in $Q$ contains an even number of variables from $A$. Moreover, let $\mathcal{B}$ be the set of all subsets of variables $A \subseteq X$ s.t. $0 < |A| \leq g(n)$ and notice that $D \leq \sum_{A\in \mathcal{B}} D(A)$. Thus $\P{D=1} \leq \sum_{A \in \mathcal{B}} \P{D(A) = 1}$.
		
	Fix $A \subseteq X$ and let $Q_1$ be a random XOR-clause chosen with variable-probability $f(n)$.
	Enumerate the $|A|$ variables in $A$ as $Y_1, Y_2, \cdots, Y_{|A|}$. Then for all $1 \leq i \leq |A|$ we define a random variable $B_i$ that is $1$ if the variable $Y_i$ appears in $Q_1$ and is $0$ otherwise. Notice that each $B_i$ is an independent Bernoulli random variable with parameter $f(n)$, and further that the number of variables from $A$ contained in $Q_1$ is exactly $\sum_{i=1}^{|A|} B_i$. By Lemma \ref{lemma:bernoulli} it follows that the probability that $Q_1$ contains an even number of variables from $A$ is $1/2 + 1/2(1-2f(n))^{|A|}$. 
	
	Since all $\ceil{sn}$ XOR-clauses of $\Q$ are chosen independently with variable-probability $f(n)$, it follows that $\P{D(A)} = (1/2 + (1-2f(n))^{|A|}/2)^{\ceil{sn}}$. For all sufficiently large $n$, we have $0 \leq f(n) \leq 1$ and thus $0 \leq 1/2 + (1-2f(n))^{|A|}/2 \leq 1$. Thus $\P{D(A)} \leq (1/2 + (1-2f(n))^{|A|}/2)^{sn}$ for all sufficiently large $n$.
	
	Finally, notice that there are exactly ${{n} \choose {w}}$ sets in $\mathcal{B}$ of size $w \leq g(n)$ and so $\P{D=1} \leq \sum_{A \in \mathcal{B}} (1/2 + (1-2f(n))^{|A|}/2)^{sn} = \sum_{w=1}^{g(n)} {{n} \choose {w}} (1/2 + (1-2f(n))^{w}/2)^{sn}$. By hypothesis, this implies that $\lim_{n \to \infty} \P{D = 1} = 0$.
\end{proof}

The following lemma combines Lemma \ref{lemma:sum:logn} and Lemma \ref{lemma:sum_to_separation} to show that a variable-probability above $2\log(sn)/(sn)$ implies $\Omega(n)$-separation. This finishes the proof of Theorem \ref{thm:separation}.
\begin{lemma}
	\label{lemma:separation:logn_to_separation}
	Let $s$ and $\rho$ be real numbers such that $0 < s \leq 1$ and $\rho > 2$. If $f(n)$ is a nonnegative function such that $\rho \frac{\log(sn)}{sn} \leq f(n) \leq 1/2$ for all sufficiently large $n$, then $\Q[p={f(n)}]$ is w.h.p. $\Omega(n)$-separated.
\end{lemma}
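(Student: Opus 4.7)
The plan is to derive Lemma \ref{lemma:separation:logn_to_separation} directly from Lemmas \ref{lemma:sum_to_separation} and \ref{lemma:sum:logn}. Since the goal is $\Omega(n)$-separation, I would apply Lemma \ref{lemma:sum_to_separation} with $g(n) = \lambda n$ for a suitable constant $\lambda > 0$, reducing the task to showing
$$\lim_{n \to \infty} \sum_{w=1}^{\lambda n} \binom{n}{w}\left(\tfrac{1}{2} + \tfrac{1}{2}(1-2f(n))^w\right)^{sn} = 0.$$

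My first step would be to eliminate the dependence on the arbitrary function $f(n)$ by monotonicity. Because $0 \le f(n) \le 1/2$ we have $0 \le 1-2f(n) \le 1$, and since $(1-x)^w$ is decreasing in $x$ on $[0,1]$, the lower bound $f(n) \ge \rho\log(sn)/(sn)$ yields
$$(1-2f(n))^w \;\le\; \Bigl(1 - 2\rho\tfrac{\log(sn)}{sn}\Bigr)^w$$
for every $w \ge 1$ and all sufficiently large $n$ (so that $\rho\log(sn)/(sn)\le 1/2$). It therefore suffices to prove the same limit with $f(n)$ replaced by the worst-case value $\rho\log(sn)/(sn)$, which is precisely the form treated by Lemma \ref{lemma:sum:logn}.

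Next I would instantiate Lemma \ref{lemma:sum:logn} with $\alpha = s$, $\kappa = \rho$, and $m = sn$. Its hypothesis requires $\kappa > -\log(2/(1+\delta)-1)/\log(1+\delta)$ for some $\delta \in (0,1)$. A brief simplification identifies this threshold with $\log\!\tfrac{1+\delta}{1-\delta}\big/\log(1+\delta)$, whose limit as $\delta \to 0^{+}$ is $2$ (by Taylor expansion, $\log\!\tfrac{1+\delta}{1-\delta} \sim 2\delta$ while $\log(1+\delta) \sim \delta$). Since $\rho > 2$, I can pick $\delta > 0$ small enough that this inequality holds and, simultaneously, $s\log(1+\delta) < \log 2$; the latter guarantees that the binary-entropy equation $H(\lambda^*) = s\log(1+\delta)$ has a solution $\lambda^* \in (0,1/2)$. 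For any $\lambda \in (0,\lambda^*)$, Lemma \ref{lemma:sum:logn} then delivers the desired limit, and applying Lemma \ref{lemma:sum_to_separation} gives $\lambda n$-separation with high probability, i.e.\ $\Omega(n)$-separation.

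The main obstacle is only the parameter-matching step: verifying that the threshold function of Lemma \ref{lemma:sum:logn} tends to $2$ from above as $\delta \to 0$, which is exactly where the hypothesis $\rho > 2$ is consumed, and checking that the entropy equation still admits a solution in $(0, 1/2)$ for the chosen $\delta$. Everything else is routine bookkeeping once the monotonicity reduction above is in place.
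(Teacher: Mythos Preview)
Your proposal is correct and follows essentially the same approach as the paper: reduce to the worst-case probability $\rho\log(sn)/(sn)$ by monotonicity, invoke Lemma~\ref{lemma:sum:logn} with $\alpha=s$ and $\kappa=\rho$ after choosing $\delta$ small enough that the threshold $-\log(2/(1+\delta)-1)/\log(1+\delta)$ (which tends to $2$ as $\delta\to 0^+$) falls below $\rho$, and then conclude via Lemma~\ref{lemma:sum_to_separation}. The only cosmetic difference is that the paper picks $\lambda=\lambda^*/2$ rather than a generic $\lambda\in(0,\lambda^*)$, and it does not separately impose $s\log(1+\delta)<\log 2$ since this is automatic from $s\le 1$ and $\delta<1$.
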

\begin{proof}
	Let $a(x) = -\log(2/(1+x) - 1) / \log(1+x)$. Notice that $\lim_{x \to 0} a(x) = 2$, $\lim_{x \to 1} a(x) = \infty$, and $a(x)$ is continuous on $(0, 1)$. Since $2 < \rho < \infty$, it follows that there is some $\delta \in (0, 1)$ with $a(\delta) < \rho$.
	
	Let $H(x) = -x\log(x) - (1-x) \log(1-x)$. Since $H$ is continuous on $[0, 1/2]$ and $H(0) < s \log(1 + \delta) < H(1/2)$, it follows there is some $\lambda^* \in (0, 1/2)$ with $H(\lambda^*) = s \log(1+\delta)$. Define $\widehat{f}(n) = \rho \frac{\log(sn)}{sn}$ and $g(n) = n \lambda^*/2$. 
	
	Then by Lemma \ref{lemma:sum:logn} with $\alpha = s$, $\kappa = \rho$, and $\lambda = \lambda^*/2$ (and with $\delta$ and $\lambda^*$ as defined above) we have that $\lim_{n \to \infty} \sum_{w=1}^{g(n)} {{n} \choose {w}} (1/2 + (1-2\widehat{f}(n))^w/2)^{sn} = 0$. 
	
	Notice that, for all sufficiently large $n$, $\widehat{f}(n) \leq f(n) \leq 1/2$ and so $(1-2\widehat{f}(n))^w \geq (1-2f(n))^w \geq 0$ for all $w \geq 1$. Therefore ${{n} \choose {w}} (1/2 + (1-2\widehat{f}(n))^w/2)^{sn} \geq {{n} \choose {w}} (1/2 + (1-2f(n))^w/2)^{sn}$ for all $w \geq 1$ and for all sufficiently large $n$. Thus $\lim_{n \to \infty} \sum_{w=1}^{g(n)} {{n} \choose {w}} (1/2 + (1-2f(n))^w/2)^{sn} = 0$ and so by Lemma \ref{lemma:sum_to_separation} we conclude that $\Q[p={f(n)}]$ is $\Omega(g(n)) = \Omega(n \lambda^*/2) = \Omega(n)$-separated w.h.p. as desired.	
\end{proof}


\section{Conclusion}\label{sec:conclusion}
We presented the first study of the runtime behavior of \SAT~solvers on random $k$-CNF-XOR formulas. We presented experimental evidence that \CryptoMiniSAT~scales exponentially on random $k$-CNF-XOR formulas across a wide range of $k$-clause densities, XOR-clause densities, and XOR variable-probabilities. To begin to explain this phenomenon in the satisfiable region, we proved that the solution space of XOR-formulas is linearly separated w.h.p..

Recent hashing-based algorithms for sampling and counting allow some freedom in the exact parameters (for example, in the XOR-clause density \cite{CMV16} or the XOR variable-probability \cite{ZCSE16}) used to generate CNF-XOR formulas. This paper suggests combinations of clause-densities and XOR variable-probabilities that are likely to be difficult for CNF-XOR solvers and thus should be avoided. Using this information to develop better heuristics for hashing-based algorithms is an exciting direction for future work that may lead to significant runtime improvements. For a more detailed discussion, see \cite{Dudek17}.


\section*{Acknowledgments}
Work supported in part by NSF grants CCF-1319459 and IIS-1527668, by NSF Expeditions in Computing project ``ExCAPE: Expeditions in Computer Augmented Program Engineering", by BSF grant 9800096,  by the Ken Kennedy Institute Computer Science \& Engineering
 Enhancement Fellowship funded by the Rice Oil \& Gas HPC Conference, and Data Analysis and 
 Visualization Cyberinfrastructure funded by NSF under grant 
 OCI-0959097. Kuldeep S. Meel is supported by the IBM PhD Fellowship and the Lodieska Stockbridge Vaughn Fellowship.


{\small 

}

\end{document}